\newtheorem*{lemma*}{Lemma}
\newtheorem*{def*}{Definition}
\def\a{\alpha}
\def\b{\beta}
\def\c{\gamma}
\def\G{\Gamma}
\def\d{\delta}
\def\e{\epsilon}
\def\l{\lambda}
\def\m{\mu}
\def\s{\sigma}
\def\t{\tau}
\def\j{\varphi}
\def\f{\phi}
\def\q{\theta}
\def\he{H}
\def\ghe{\mathrm{H}}
\def\mqb{\overset{\mathrm{M}}{\longmapsto}}
\def\tr{\text{tr}}
\newcommand{\avg}[1]{\langle{#1}\rangle}
\newcommand{\ket}[1]{| {#1} \rangle} 
\newcommand{\bra}[1]{\langle {#1} |} 
\newcommand{\braket}[2]{\langle {#1} \vphantom{#2} | {#2} \vphantom{#1} \rangle}
\newcommand{\mel}[3]{\langle {#1} \vphantom{#2} | {#2} \vphantom{#3} | {#3} \rangle}
\DeclareDocumentCommand{\Tr}{m m O{\big}}{{\rm Tr}_{\:\!{#1}}#3({#2}#3)}
\DeclareMathOperator*{\ox}{\otimes}
\DeclareMathOperator*{\Var}{\mathrm{Var}}
\DeclareMathOperator*{\Span}{Span}
\DeclareMathOperator{\sgn}{sgn}
\newcommand{\bket}[1]{\big| {#1} \big\rangle}
\newcommand{\bbra}[1]{\big\langle {#1} \big|}
\begin{document}
\title{Macroscopically nonlocal quantum correlations}
\author{Miguel Gallego}
\email{miguel.gallego.ballester@univie.ac.at}
\affiliation{University of Vienna, Faculty of Physics, Vienna Center for Quantum Science and Technology, Boltzmanngasse 5, 1090 Vienna, Austria}
\author{Borivoje Daki\'c}
\email{borivoje.dakic@univie.ac.at}
\affiliation{University of Vienna, Faculty of Physics, Vienna Center for Quantum Science and Technology, Boltzmanngasse 5, 1090 Vienna, Austria}
\affiliation{
Institute for Quantum Optics and Quantum Information (IQOQI),
Austrian Academy of Sciences, Boltzmanngasse 3,
A-1090 Vienna, Austria}

\date{\today}

\begin{abstract}
It is usually believed that coarse-graining of quantum correlations leads to classical correlations in the macroscopic limit. Such a principle, known as macroscopic locality, has been proved for correlations arising from independent and identically distributed (IID) entangled pairs. In this letter we consider the generic (non-IID) scenario. We find that the Hilbert space structure of quantum theory can be preserved in the macroscopic limit. This leads directly to a Bell violation for coarse-grained collective measurements, thus breaking the principle of macroscopic locality.

\end{abstract}

\maketitle
\subsection*{Introduction}
Quantum mechanics does not impose any limit on the size of the system it describes, which can, in principle, be as large as a cat \cite{schrodinger}. However, quantum behavior is not observed at the macroscopic scale, where the world appears to be classical. The idea that quantum mechanics must reproduce classical physics in the limit of large quantum numbers is known as the correspondence principle \cite{bohr}. Yet, this principle, in all its generality, has not been rigorously stated and proved, mostly because the concept of ``macroscopic" remains somewhat vague. While different interpretations of the macroscopic limit may lead in general to different conclusions, there is still confidence that quantum behavior must somehow disappear in the limit. 

A possible explanation for the emergence of classicality from quantum theory is via the coarse-graining of the measurements \cite{mermin, poulin, caslav, nw, kofler2013, jeong2014, henson2015, demelo}. In this respect, one important consequence of the correspondence principle is the concept of \textit{macroscopic locality} (ML) \cite{nw}: coarse-grained quantum correlations become local (in the sense of Bell \cite{bell}) in the macroscopic limit. ML has been challenged in different circumstances, both theoretically and experimentally \cite{bancalmultipair, tura, schmied, poh, manybox, paritybellviolation, cabello, meng} (see Ref. \cite{daltonreview} for a review). However, as far as we know, nonlocality fades away under coarse-graining when the number of particles $N$ in the system goes to infinity. In this sense, ML was proposed by Navascu\'{e}s and Wunderlich (NW) \cite{nw} as an axiom for discerning physical post-quantum theories. In particular, they considered a bipartite Bell-type experiment where the parties measure intensities with a resolution of the order of $\sqrt{N}$, or equivalently $O(\sqrt{N})$ coarse-graining. Then, under the IID assumption, i.e. under the premise that particles are only entangled by \emph{independent and identically distributed} pairs, they prove ML for quantum theory.

In this letter we generalize the concept of ML to any level of coarse-graining $\a \in [0,1]$, meaning that the intensities are measured with a resolution of the order of  $N^\a$. We drop the IID assumption and we investigate the existence of a boundary between quantum (nonlocal) and classical (local) physics, identified by the minimum level of coarse-graining $\a$ required to restore locality. To do this, we introduce the concept of \emph{macroscopic quantum behavior} (MQB), demanding that the Hilbert space structure, such as the \emph{superposition principle}, is preserved in the thermodynamic limit. Then, we provide a concrete example of MQB at $\a = 1/2$ which violates ML. This is the opposite of what happens in the IID case, where ML is known to hold, as shown by NW. Finally, we analyze the effects of noise and particle losses, showing robustness of the macroscopic statistics. Altogether, our findings shed new light on the problem of the transition (if any) between quantum and classical physics.

\subsection*{Experimental setup and macroscopic locality}
We consider a simple Bell-type setting as illustrated in Fig. \ref{fig:mbe}. A state $\rho^{[2N]}$ of $2N$ particles is produced, out of which $N$ are sent to Alice and $N$ to Bob. Alice performs a collective measurement described by the (single-particle) POVM elements $E_{a|p}^{A}$, where $a \in \Omega_A$ is her outcome and $p \in \Sigma_A$ is her measurement setting (and similarly does Bob). Alice (Bob) has the following limitations:
\begin{enumerate}[label=(\textit{\roman*})]
\item \label{ass:intensity} \textit{Intensity measurement}: No access to individual outcomes, but only to their sum or \textit{intensity} $I_A = \sum_{i=1}^N a_i$. 
\item \label{ass:resolution} $O(N^\a)$ \textit{coarse-graining}: The measuring scale for $I_{A}$ has a limited resolution of the order of $N^\a$, where $\a \in [0, 1]$ is the \emph{order} or \emph{level of coarse-graining}. 
\end{enumerate}
These assumptions naturally lead to the following macroscopic variable
\begin{align}
    X_\a^{[N]} = \frac{1}{N^\a}   \sum_{i=1}^N  \big( a_i - \avg{a_i} \big) \, .
    \label{eq:macrvar}
\end{align}
This quantity, with mean set to zero, has well-studied limit properties in the classical domain \cite{gk}. The special case $\a = 1/2$ (also discussed here) is closely related to the the central limit  theorem \cite{clt}, for which $X_{\a=1/2}^{[N]}$ properly captures  the quantum  fluctuations  of  the  intensity  about  its  mean \cite{benatti2014}. The above macroscopic variable in turn defines the POVM associated to it
\begin{align}\label{CG:povm}
    E(X_\a^{[N]}) = \sum_{\sum_{i=1}^N  ( a_i - \avg{a_i} ) / N^\a \, = \, X_\a^{[N]}} E_{a_1}^A \ox ... \ox E_{a_N}^A \, .
\end{align}
Altogether, $X_\a^{[N]}$ and $E\big(X_\a^{[N]}\big)$ specify Alice's measurement, and likewise holds for Bob. Finally, Alice and Bob repeat their experiment many times in order to extract the bipartite distribution. The central quantity of interest is the limit thereof, namely
\begin{align}\label{eq:biplimdistribution}
    P(x, y) = \lim_{N \to \infty} \tr \,  \rho^{[2N]}E\big(X_\a^{[N]}\big) \otimes E\big(Y_\a^{[N]}\big) \, ,
\end{align}
where $X_\a^{[N]}\to x$ and $Y_\a^{[N]}\to y$ denote convergence in distribution. A necessary condition for convergence is that the variance scaling of the measured intensity (as determined by the state $\rho^{[2N]}$) matches the order of coarse-graining, i.e. that if the variance of the intensity $\Var(I)$ scales as $N^{2 \b}$, then $\a=\b$. Otherwise, if $\a < \b$, the distribution \eqref{eq:biplimdistribution} will simply not converge, and if  $\a > \b$, the distribution will converge to a Dirac delta function, thus giving trivial statistics.

The question of ML refers to the locality properties of the limit distribution \eqref{eq:biplimdistribution}. We will say a theory possesses ML at order $\a$ if the limit distributions $P(x, y)$ for any choice of measurements can be described by a local model
\begin{align*}
    P(x, y) = \int d \lambda \, \mu(\lambda) \, P_A(x | \lambda) \, P_B(y | \lambda) \, .
\end{align*}
On the other hand, if the above factorization does not hold, we say that the theory exhibits \emph{macroscopically nonlocal correlations} (at order $\a$). It seems natural to conjecture the existence of a quantum-to-classical transition point, i.e. a critical value $\a_c$ such that quantum theory violates ML at any $\a<\a_c$, while locality is restored for $\a>\a_c$. Note that, at $\a=\a_c$, both ML and violation of ML are possible. Intuitively, we expect quantum theory to violate ML at $\a=0$ (no coarse-graining) and to satisfy it at $\a=1$ (full coarse-graining). Indeed, there are strong evidences that this is the case, as presented in Refs. \cite{paritybellviolation} and \cite{barbosa} respectively. A more interesting result is that of NW. In their paper, the authors consider the case of $O(\sqrt{N})$ coarse-graining ($\a=1/2$) with IID states, described by a density matrix of the form $\rho^{[2N]} = (\rho_{AB})^{\ox N}$. Then, according to the central limit theorem, the distributions of the macroscopic variables \eqref{eq:macrvar} are Gaussian. Using this, they show that the corresponding bipartite Gaussian distributions \eqref{eq:biplimdistribution} are local, implying that $\a_c^\text{IID} \leq 1/2$. However, it is far from clear whether actually $\a_c^\text{IID} = 1/2$, and the possibility that even $\a_c^\text{IID} =0$ is not discarded \cite{manybox}. The IID assumption might then be too restrictive. Our goal here is to drop it and consider the most general scenario. In the next section, we show how non-IID states at $\a=1/2$ can give rise to non-Gaussian distributions in the limit. This result will lead us to full quantum behavior at $\a=1/2$, for which we also show violation of ML. Thus we prove that $\a_c \geq 1/2$ in the general case (non-IID).

\begin{figure}
        \centering
        \includegraphics[width=0.45\textwidth]{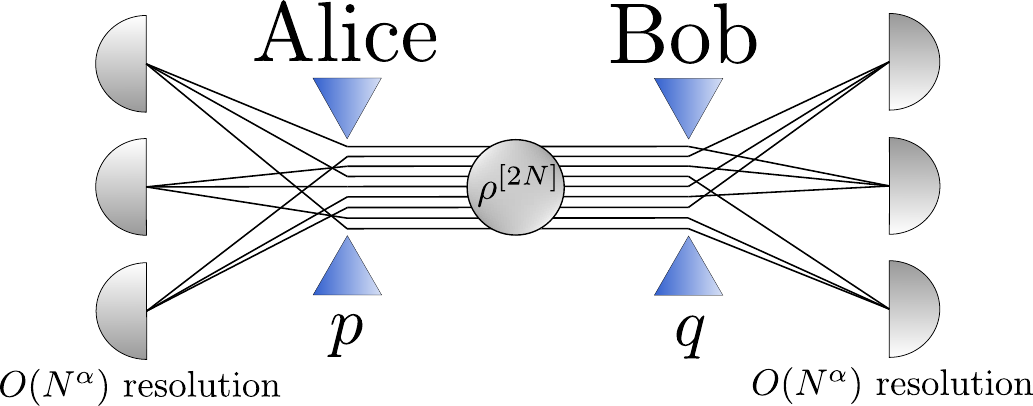}
        \caption{\textbf{Macroscopic Bell-type experiment.} A source produces a $2N$-particle state, and sends half of the particles to Alice and half to Bob. The parties perform collective measurements on their beams, specified by the settings $(p,q)$. For each outcome, local detectors count the number of particles with resolution of order $N^\a$.}
        \label{fig:mbe}
\end{figure}

\subsection*{Non-IID states and non-Gaussian limit distributions}
In the following, we will consider the case $\a=1/2$ for a single party, say Alice, and a system of $N$ spin-$1/2$ particles (qubits). For simplicity, we will consider projective (von Neumann) measurements, with $E_a^2 = E_a$, for which we choose binary outcomes $a=\pm 1$. It is useful to define the (single-particle) observable $A= \sum_a a E_a$, such that the macroscopic variable \eqref{eq:macrvar} is naturally promoted to the macroscopic observable
\begin{align}
    \hat{X}_{\a=1/2}^{[N]} = \frac{1}{\sqrt{N}} \sum_{i=1}^N \Big( A_i - \avg{A_i} \Big) \, .
    \label{eq:macrobs}
\end{align}
We consider an example of non-IID state, the W state
\begin{align*}
    \ket{W} = \frac{1}{\sqrt{N}} \Big( \ket{100...0} + \ket{010...0}+\dots +\ket{000...1} \Big) \, .
\end{align*}
The characteristic function of $\hat{X}_{\a=1/2}^{[N]}$ for this state can be written in terms of $\mathcal{A} = e^{i t (A- \avg{A})/\sqrt{N}}$ as $\chi(t) =   \mel{W}{ \mathcal{A}^{\ox N}}{W}$. Direct computation gives
\begin{align*}
\chi(t) =  \mathcal{A}_{00}^{N-2} \,  \Big[ \mathcal{A}_{11} \mathcal{A}_{00} +  (N-1) \mathcal{A}_{10}  \mathcal{A}_{01} \Big] \, ,
\end{align*}
where $\mathcal{A}_{ij}= \mel{i}{\mathcal{A}}{j}$. Now, expanding $\mathcal{A} = 1 + it(A-\avg{A}) /\sqrt{N} - t^2 (A-\avg{A})^2/N + O(N^{-3/2})$ and using that $\avg{A}=(1-1/N)A_{00}+1/N A_{11}$, the thermodynamic limit reads $\chi(t) = e^{-\s^2 t^2/2} \big( 1 -  \s^2 t^2 \big)$, where we have defined the variance $\s^2 = \avg{A^2}-\avg{A}^2$. The Fourier transform of $\chi(t)$ gives the non-Gaussian limit distribution
\begin{align*}
    P(x) = \frac{1}{\sqrt{2 \pi \s^2}} \,  \frac{x^2}{\s^2} \, e^{-x^2/(2 \s^2)} \, .
\end{align*}
Similarly, for any $N$-particle Dicke state \cite{dicke}
\begin{align*}
\ket{N,k} = \frac{1}{\sqrt{{N \choose k}}} \Big( \ket{ \underbrace{1...1}_{k} 0...0 } + \text{permutations} \Big) \, ,
\end{align*}
it has been shown in Ref. \cite{dorlas} that the limit distribution of macroscopic observables of the form of \eqref{eq:macrobs} can be written in terms of Hermite polynomials as $P_k(x) \sim e^{-x^2/2} \he_k^2(x)$. Thus, for such a family of states, we find non-Gaussian limiting behavior, which is conventionally associated to non-classical phenomena. Moreover, such a distribution coincides with the Born rule for the distribution in position of the $k$-th excited state of the harmonic oscillator, $P(x) = | \langle x | k \rangle |^2$, where the wave-function is
\begin{align*}
    \langle x | k \rangle = \frac{1}{(2 \pi )^{1/4}} \frac{1}{\sqrt{k!}} e^{-x^2/4} \, \he_k (x ) \, .
\end{align*}
Here, $H_k(x)$ are the Hermite polynomials \footnote{Note that we are not using the standard convention for Hermite polynomials.}
\begin{align*}
    \he_k(x) = (-1)^k e^{x^2/2} \frac{d^k}{dx^k} e^{-x^2/2} \, .
\end{align*}

\subsection*{Macroscopic quantum behavior}
The analysis provided above suggests the following limit mapping
\begin{align}
\ket{N,k} \underset{N \to \infty}{\xrightarrow{\hspace*{.8cm}}}  \ket{k} \, , \quad \hat{X}_{\a=1/2}^{[N]} \underset{N \to \infty}{\xrightarrow{\hspace*{.8cm}}} \hat{x} \, ,
\label{eq:nonlinearmap}
\end{align}
where $\ket k$ is the number basis of the harmonic oscillator and $\hat{x}$ is the position operator. The identification between Dicke states and eigenstates of the harmonic oscillator is natural: it is known that the $SU(2)$ algebra of angular momentum contracts to the Heisenberg algebra of creation and annihilation operators in the limit of large total angular momentum \cite{dickealgebra}. This is also related to the so-called photon-spin mapping \cite{photonspinmapping}. However, \eqref{eq:nonlinearmap} is not just a map of states, but a joint map of states and observables. We would like to formalize this idea and generalize it to general POVM measurements and to any level of coarse-graining. We shall define a mapping that preserves the (separable) Hilbert space structure of quantum theory, i.e. the \emph{superposition principle} as well as the \emph{Born rule}. To achieve this, let us introduce the concept of \textit{macroscopic quantum behavior} (MQB) in the following way.
\begin{def*}
Let $\mathcal{H}^N$ be the Hilbert space of $N$ particles. For every $N \geq N_0$, let $\mathcal{M}_d^N \subset \mathcal{H}^N$ be a subspace of fixed dimension $d \geq 2$, and let $\mathcal{M}_d^\infty$ be an auxiliary $d$-dimensional Hilbert space. Since these (sub)spaces are all $d$-dimensional, they are all isomorphic as vector spaces:
\begin{align*}
    \mathcal{M}_d^{N_0} \cong \mathcal{M}_d^{N_0+1} \cong \mathcal{M}_d^{N_0+2} \cong ... \cong \mathcal{M}_d^\infty \, .
\end{align*}
Let us fix the sequence of isomorphisms by a choice of basis in every space:
\begin{align*}
    \ket{k}_{N_0} \mapsto \ket{k}_{N_0+1} \mapsto \ket{k}_{N_0+2} \mapsto ...\mapsto  \ket{k}_\infty \, ,  
\end{align*}
for every $k=1, ..., d$. This gives a unique identification among states $\ket{\Psi_{N}} = \sum_{k=1}^d c_k \ket{k}_{N} \in \mathcal{M}_d^{N}$ for all $N \geq N_0$, including $\ket \psi = \sum_{k=1}^d c_k \ket{k}_\infty \in \mathcal{M}_d^\infty$.

Now, we say that the sequence of spaces $\mathcal{M}_d^N$, together with the corresponding choice of bases (isomorphisms), possesses MQB at order $\a$ if, for any $\ket{\Psi_{N}} \in \mathcal{M}_d^{N}$, we have
\begin{align}
    \lim_{N\to \infty} \mel{\Psi_N}{E\big( X_\a^{[N]} \big) }{\Psi_N} = \mel{\psi}{e(x)}{\psi}
    \label{eq:limitcondition}
\end{align}
for all measurements specified by \eqref{eq:macrvar} and \eqref{CG:povm}. Here, $X_\a^{[N]} \to x \in \Omega$ and $e(x)$ is a POVM element acting in $\mathcal{M}_d^\infty$, satisfying $\sum_{x \in \Omega} e(x) = 1$. 
\end{def*}
This definition clearly ensures that both the Born rule and the superposition principle remain valid in the macroscopic limit. In this case we write
\begin{align*}
    \ket{k}_N \mqb \ket k \, , \quad E \big(X_\a^{[N]} \big) \mqb e(x) \, .
\end{align*}
In order to fulfill the above definition, the macroscopic variable as defined in Eq. \eqref{eq:macrvar} needs to be modified, as it exhibits a nonlinear dependence on the input state via the mean value $\avg{a_i}$. Such a behavior is inconsistent with the Born rule in general, and the simplest way to fix this is to substitute this mean value by some constant $\m$, thus redefining
\begin{align}\label{eq:macrvarlinear}
    X_\a^{[N]} = \frac{1}{\t \, N^\a} \sum_{i=1}^N \big( a_i - \m \big) \, ,
\end{align}
where the parameter $\t$ is introduced for future convenience. Such a modification of $X_\a^{[N]}$ consequently induces a modification of the POVM element $E\big(X_\a^{[N]} \big)$ given in Eq. \eqref{CG:povm}. With this we are ready to provide examples.

\subsection*{MQB at $\a=1/2$}
We consider again Dicke states, i.e. the sequence of spaces $\mathcal{M}^N = \Span \{ \ket{N,k} \}_k$ together with the identification $\ket{N,k} \mapsto \ket{N+1,k}$, where $k=0, 1, ... , d-1$ for some finite $d$ which we leave unspecified for now. We shall directly evaluate the limit \eqref{eq:limitcondition}. Given a single-particle POVM with elements $E_a$, let us define the matrix $A = \sum_a a E_a$. The explicit calculation of the distribution of $X_{\a=1/2}^{[N]}$ on a state $\ket{\Psi_N} = \sum_k c_k \ket{N,k}$ is provided in Appendix \ref{appa}. Choosing $\mu = A_{00}$ and $\tau  = |A_{01}|$ for the macroscopic variable \eqref{eq:macrvarlinear}, the limit distribution can be written as
\begin{align}
    P(x) = \sum_{k,l} e^{-i k \j} c_k^* c_l  e^{i l \j}  \int dx' \, \frac{e^{-  \frac{(x-x')^2}{2 s^2} }}{\sqrt{2 \pi s^2}} \braket{k}{x'} \braket{x'}{l} \, ,
    \label{eq:distributionnk}
\end{align}
where $\j = \arg (-A_{01})$, $s^2 = \s^2/\t^2-1$ in terms of the limit variance $\s^2 = \mel{0}{\sum_a a^2 E_a}{0} - \mel{0}{A}{0}^2$, $\ket x$ is the position basis and $\ket k$ is the number basis of the harmonic oscillator. As we see, for the limit variable we get $x \in \mathbb{R}$. Therefore, the MQB at $\a=1/2$ is given by
\begin{align}
\ket{N,k} \mqb \ket{k} \, , \quad   E\big( X_{\a=1/2}^{[N]} \big) \mqb U^\dagger_\j \, e_s(x) U_\j \, ,
 \label{eq:linearmap}
\end{align}
where $U_\j= e^{i \j \hat{k}}$, in terms of the number operator $\hat{k}$ for the harmonic oscillator, and $e_s(x)$ is the Gaussian POVM element
\begin{align}
    e_s(x) = \frac{1}{\sqrt{2 \pi  s^2}} \int dx' \, e^{- \frac{(x-x')^2}{2 s^2}} \, \ket{x'} \bra{x'} \, .
    \label{eq:gaussianpovm}
\end{align}
We see that the limit (auxiliary) space $\mathcal{M}_d^\infty$ is naturally embedded in the infinite-dimensional space of the harmonic oscillator. Since the dimension $d$ is arbitrarily large, we can freely set $\mathcal{M}_d^{\infty}$ to be the whole space of the harmonic oscillator.

Let us now analyze in more detail the simple case of projective measurements. For these, the values of $\s$ and $\t$ coincide, so that $s=0$. Then $e_s(x)$ becomes the projector on position $\ket{x} \bra{x}$, and the observable $\hat{X}^{[N]}= \sum_{X^{[N]}} X^{[N]} \, E(X^{[N]})$ becomes in the limit the phase-space observable
\begin{align}
U_\j^\dagger \, \hat{x} \, U_\j = \hat{x} \cos \j  + \hat{p} \sin \j \, ,
\label{eq:phasespace}
\end{align}
depending on $A$ only through the off-diagonal phase $\j = \arg ( - A_{01})$. With this, one can see that the operators $\hat{X}_{\a=1/2}^{[N]}$ form a non-commutative bosonic algebra in the thermodynamic limit. This has been known in the context of fluctuation observables \cite{goderis1989, benatti2014}. From the perspective of MQB, we obtain incompatibility of measurements along with the superposition principle in the macroscopic limit. This is a strong hint for violation of ML.

\subsection*{Violation of ML at $\a=1/2$}
Now we consider again the bipartite Bell scenario depicted in Fig. \ref{fig:mbe}, and assume for simplicity projective measurements. Let the source produce a bipartite $2N$-particle state of the form
\begin{align}
\ket{\Psi_{2N}} = \sum_k c_k \ket{N,k}_A \ox \ket{N,k}_B \, .
\label{eq:diagonaldicke}
\end{align}
Let Alice measure the macroscopic observable $\hat{X}_{\a=1/2}^{[N]}$ with settings $p \in \Sigma_A$, and let Bob measure $\hat{Y}_{\a=1/2}^{[N]}$ with settings $q \in \Sigma_B$. Using the MQB \eqref{eq:linearmap} for both, these measurements become phase-space observables \eqref{eq:phasespace} for different angles $\varphi_A$ and $\varphi_B$ on a state
\begin{align*}
\ket{\psi} = \sum_k c_k \ket{k}_A \ox \ket{k}_B \, .
\end{align*}
Such a system exhibits Bell nonlocality for a suitable choice of the constants $c_k$ and phase-space measurements \cite{munro}, and can be easily generalized to the multipartite case. In Appendix \ref{appb} we show explicit violation of the CHSH inequality. 

\subsection*{Robustness}
In this part we will study robustness of our MQB. We will analyze this robustness in two ways: losses and noise at the microscopic level and global noise of the order of $\sqrt{N}$. First, we consider the case of losses, where individual particles only reach the detectors with some probability $p \in [0,1]$, and they are lost with probability $1-p$. If the parties are able to measure the number of received particles (with a precision of the order of $\sqrt{N}$), then, as shown in Appendix \ref{appc}, this loss simply translates into a rescaled variable $x \to x/p$ together with $\s^2 \to \s^2/p$. Rescaling back to the old variable, we get an effective broadening of the limit Gaussian POVM \eqref{eq:gaussianpovm}
\begin{align}
    s_p^2 = \frac{\s^2}{p^3 \t^2} - 1  \, .
    \label{broadwidth}
\end{align}
Similarly, independent single-particle noise channels $\rho^{[N]} \mapsto \Gamma^{\ox N} (\rho^{[N]})$ can be absorbed in the single-particle POVMs (see Appendix \ref{appc}). Since any POVM is mapped to the limit POVM $U_\j^\dagger e_s(x) U_\j$ parametrized by $s$ and $\j$, such noise channels can only affect the macroscopic statistics in two simple ways: coherently, by shifting the angle $\j$, or incoherently, by enlarging the width $s$. In Appendix \ref{appc} we provide explicit calculations for the depolarizing and dephasing channels, showing broadening effects as in \eqref{broadwidth}. Finally, we consider the measurement precision (both of the intensity and of the number of particles) to be captured by some classical independent noise bounded by $\tau \e \sqrt{N}$ for some constant $\e$. This simply translates into an additive classical random variable $r$ bounded by $\e$, so that the random variable in the MQB \eqref{eq:linearmap} becomes
\begin{align}
    X_{\a=1/2}^{[N]} \, \to \,  x + r\, .
    \label{eq:robustrv}
\end{align}
Altogether this shows that our MQB is robust. In principle, the macroscopic Bell violation should still be observable for small enough values of the global noise $\e$ and of the effective parameter $s$ capturing noise and losses at the microscopic level.

\subsection*{MQB at $\a=1$}
We close with a final example of MQB for $\a=1$ (maximal coarse-graining) with $2N$-particle Dicke states $\ket{2N, N+k}$. As before, we have $A= \sum_a a E_a$, and choose $\m = \frac{1}{2} \tr A$ and $ \t = |A_{01}|$ for the macroscopic variable \eqref{eq:macrvarlinear}. In Appendix \ref{appd} we compute the limit distribution $P(x)$ of the variable $X_{\a=1}^{[2N]}$ on a superposition state $\ket{\Psi_{2N}} = \sum_k c_k \ket{2N, N+k}$. This distribution has a finite support $x \in [-1, 1]$, as opposed to the case $\a=1/2$, where we had $x \in \mathbb{R}$. It is therefore convenient to set $x=\cos \q$, with $\q \in [0, \pi]$, so that the distribution can be written as
\begin{align}
    P(\q) = \sum_{k,l}  e^{-i k \j} \, c_k^* c_l \,  e^{i l \j}  \,  \frac{ e^{-i(k-l) \q} + e^{i(k-l) \q}}{ 2 \pi } \, ,
    \label{eq:distribution2nn}
\end{align}
where $\j = \arg(A_{01})$. Then, the sequence of spaces $\mathcal{M}^{2N} = \Span \{ \ket{2N, N+k} \}_k$ together with the identification $\ket{2N, N+k} \mapsto \ket{2(N+1), N+1+k}$ has MQB at $\a=1$ given by
\begin{align*}
  \ket{2N, N+k}  \mqb  \ket k \, , \quad   E \big(X_{\a=1}^{[2N]} \big) \mqb U^\dagger_\j \, e(x) U_\j \, .
\end{align*}
Here, $\ket k$ is the eigenbasis of the quantum rotor \footnote{The quantum rotor is the system with one rotational degree of freedom $\q$ and Hamiltonian $L_z = - i \partial_{\q}$.}, with wave-functions
\begin{align*}
    \braket{\pm \q}{k} = \frac{1}{\sqrt{2 \pi}} e^{\pm i k \q} \, , \quad \q \in [0, \pi] \, ,
\end{align*}
$U_\j = e^{i \j \hat{k}}$ and $e(\q)= \ket \q \bra \q + \ket{- \q} \bra{ - \q}$. Despite the MQB, the POVMs $U^\dagger_\j \, e(x) U_\j = \ket{\q-\j} \bra{\q-\j} + \ket{-\q-\j} \bra{-\q-\j}$ are compatible for all $\j$. Given this measurement compatibility we can construct a joint distribution for all settings, thus restoring classicality and Bell locality. In Appendix \ref{appd} we explicitly provide a local model for the bipartite distributions arising from this MQB. 

\subsection*{Conclusions}
In this letter we have introduced a generalized concept of \emph{macroscopic locality} at any level of coarse graining $\a \in [0,1]$. We have investigated the existence of a critical value $\a_c$ that marks the quantum-to-classical transition. We have introduced the concept of MQB at level $\a$ of coarse-graining, which implies that the Hilbert space structure of quantum mechanics is preserved in the thermodynamic limit. This facilitates the study of macroscopic quantum correlations. By means of a particular MQB at $\a=1/2$ we show that $\a_c \geq 1/2$, as opposed to the IID case, for which $\a_c^\text{IID} \leq 1/2$. An upper bound on $\a_c$ is however lacking in the general case. The possibility that no such transition exists remains open, and perhaps there exist systems for which ML is violated at $\a=1$.

\begin{acknowledgments}
\emph{Acknowledgments.}--- We would like to thank Nicol\'{a}s Medina-S\'{a}nchez 
and Joshua Morris for helpful comments. Both authors acknowledge support from  the Austrian Science Fund (FWF) through BeyondC-F7112. 
\end{acknowledgments}
\bibliographystyle{apsrev4-1}

\bibliography{references}

\onecolumngrid
\appendix

\section{MQB at $\a=1/2$}
\label{appa}

Here derive the distribution \eqref{eq:distributionnk} in the main text for the macroscopic variable
\begin{align}
    X_{\a=1/2}^{[N]} = \frac{\sum_{i=1}^{N} (a_i - \mu)}{\sqrt{N}\t}
\end{align}
on a general superposition state $\ket{\Psi_N} = \sum_k c_k \ket{N,k}$. The characteristic function is
\begin{align}
    \chi(t) & = \sum_{a_1, ..., a_{N}} \text{tr} \bigg[ \rho \Big( E_{a_1}^{(1)} \ox ... \ox E_{a_{N}}^{(N)} \Big) \bigg]e^{i t X^{[N]}} \nonumber \\
    & =  \, \text{tr} \bigg[ \rho \Big( \sum_{a_1} E_{a_1}^{(1)} e^{i t (a_1-\m) / (\sqrt{N} \t)} \Big) \ox ... \ox  \Big( \sum_{a_{N}} E_{a_{N}}^{(N)} e^{i t (a_{N}-\m) / (\sqrt{N} \t)} \Big) \bigg] \nonumber \\
    & =   \sum_{k,l}c_k^* c_l   \, \bra{N, k} \bigg[ \Big( \sum_a E_a e^{i t (a-\m) / (\sqrt{N} \t)} \Big)^{\ox N} \bigg] \ket{N, l} \nonumber \\
    & =   \sum_{k,l}c_k^* c_l   \, \bra{N, k} \mathcal{A}^{\ox N} \ket{N, l} \, ,
\end{align}
where $\mathcal{A} = \sum_a E_a e^{i t (a-\m)/(\sqrt{N} \t)}$. For clarity, let us first compute the matrix element in the sum above for the case $k \geq l$:
\begin{align}
    \bra{N, k} & \, \mathcal{A}^{\ox N} \,  \ket{N, l} = \nonumber \\
    & = \frac{1}{\sqrt{ \binom{N}{k} \binom{N}{l}}} \Big( \bbra{\underbrace{1...1}_{k} 0...0} + \text{perm.} \Big) \mathcal{A}^{\ox N} \Big( \bket{\underbrace{1...1}_{l} 0...0} +  \text{perm.} \Big) \nonumber \\
    & = \sqrt{\frac{\binom{N}{k}}{ \binom{N}{l}}}  \, \bbra{\underbrace{1...1}_{k} 0...0}  \, \mathcal{A}^{\ox N} \Big( \bket{\underbrace{1...1}_{l} 0...0} +  \text{perm.} \Big) \nonumber \\
    & =  \sqrt{\frac{\binom{N}{k}}{ \binom{N}{l}}}  \, \sum_{q=0}^l  \, \binom{k}{q} \binom{N-k}{l-q}  \, \bbra{\underbrace{1...1}_{k} 0...0}  \, \mathcal{A}^{\ox N} \bket{\underbrace{\underbrace{1...1}_{q}0...0}_{k} \underbrace{\underbrace{1..1}_{l-q} 0...0}_{N-k}}  \nonumber \\
    & = \sqrt{\frac{\binom{N}{k}}{ \binom{N}{l}}}  \,  \sum_{q=0}^l \,  \binom{k}{q} \binom{N-k}{l-q}  \,  \mel{1}{ \mathcal{A} }{1}^{q} \, \mel{1}{\mathcal{A} }{0}^{k-q} \, \mel{0}{ \mathcal{A} }{1}^{l-q} \,  \mel{0}{\mathcal{A} }{0}^{N-k-l+q} \nonumber \\
    & = \mel{0}{ \mathcal{A}}{0}^{N-k-l} \,  \sum_{q=0}^l  \frac{\sqrt{k! \, l! \,}}{ q! (k-q)! (l-q)!}\, N^{\frac{k+l-2q}{2}} \big[ 1 + O(1/N) \big] \,  \mel{1}{ \mathcal{A}}{1}^{q} \, \mel{1}{ \mathcal{A} }{0}^{k-q}  \, \mel{0}{ \mathcal{A} }{1}^{l-q} \,  \mel{0}{\mathcal{A} }{0}^{q} \, .
\end{align}
In the second equality we have used permutational invariance; in the third equality we have gathered all the terms that contribute equally, multiplied by their combinatorial multiplicity; in the last line we have used Stirling's formula. Now, let us expand 
\begin{align}
    \mathcal{A} & =  \sum_a E_a \Big[ 1 + i t \frac{a-\m}{\sqrt{N}\t} - \frac{t^2}{2} \frac{(a-\m)^2}{N \t^2} + O(N^{-3/2}) \Big] \nonumber \\
    & = 1 + \frac{it}{\sqrt{N}\t} \Big( \sum_a a E_a-\m \Big) - \frac{t^2}{2 N \t^2} \Big( \sum_a a^2 E_a - 2 \m \sum_a a E_a + \m^2 \Big) + O(N^{-3/2}) \nonumber \\
    & = 1 + \frac{it}{\sqrt{N}\t} \big( A -\m \big) - \frac{t^2}{2 N \t^2} \big( A^{(2)} - 2 \m A + \m^2 \big) + O(N^{-3/2}) \, , 
\end{align}
where for convenience we have introduced the matrices $A=\sum_a a E_a$ and $A^{(2)} = \sum_a a^2 E_a$. Then, we can see that $\mel{1}{\mathcal{A}}{1}^q = 1 + O(1/\sqrt{N})$ and $\mel{0}{\mathcal{A}}{0}^q = 1 + O(1/\sqrt{N})$, while $\mel{1}{\mathcal{A}}{0}^{k-q} = \Big( \frac{i t A_{10}}{\sqrt{N} \t} \Big)^{k-q} \big[ 1 + O(1/\sqrt{N}) \big]$ and $\mel{0}{\mathcal{A}}{1}^{l-q} = \Big( \frac{i t A_{01}}{\sqrt{N} \t} \Big)^{l-q} \big[ 1 + O(1/\sqrt{N}) \big]$. These first order contributions to the off-diagonal matrix elements cancel the overall factor of $N^{\frac{k+l-2q}{2}}$, while higher order corrections are suppressed. On the other hand, the factor in the front is
\begin{align}
    \mel{0}{\mathcal{A}}{0}^{N-k-l} & = \exp \Big\{ (N-k-l) \log \mel{0}{\mathcal{A}}{0} \Big\} \nonumber \\
    & = \exp \bigg\{ (N-k-l) \log \bigg[ 1 + \frac{it (A_{00}-\m)}{\sqrt{N} \t}- \frac{t^2 (A^{(2)}_{00} - 2 \m A_{00} + \m^2)}{2N\t^2} + O(N^{-3/2}) \bigg] \bigg\} \nonumber \\
    & = \exp \bigg\{ (N-k-l) \bigg[ \frac{i t (A_{00}-\m)}{\sqrt{N} \t} - \frac{t^2 (A^{(2)}_{00}-2 \m A_{00}+\m^2)}{2N\t^2} + \frac{t^2 (A_{00}-\m)^2}{2N\t^2} + O(N^{-3/2}) \bigg] \bigg\} \nonumber \\
    & = \exp \bigg\{ \frac{i t \sqrt{N}}{\t}(A_{00}-\m) - \frac{t^2 \s^2}{2 \t^2} +O(1/\sqrt{N}) \bigg\} \, , 
\end{align}
where we have further defined $\s^2 = A^{(2)}_{00} - (A_{00})^2$. Then, the matrix element reads
\begin{align}
    \mel{N,k}{\mathcal{A}^{\ox N}}{N,l} = e^{i t \sqrt{N} (A_{00}-\m)/\t - t^2 \s^2/(2 \t^2) + O(1/\sqrt{N})} \sum_{q=0}^l \frac{\sqrt{k! \, l!}}{q! (k-q)! (l-q)!} \left( \frac{i t A_{10}}{\t} \right)^{k-q}   \left( \frac{i t A_{01}}{\t} \right)^{l-q} \big[ 1 +O(1/\sqrt{N}) \big] \, .
\end{align}
For the case $l>k$ all we have to do is exchange $k$ and $l$, so that the sum only runs until the smallest of the two, and also exchange $A_{10}$ and $A_{01}$. Then, choosing $\m = A_{00}$ and $\t = |A_{01}|$, the characteristic function in the limit reads
\begin{align}
    \chi(t) = \sum_{k,l} e^{-ik\j} c_k^* c_l e^{il\j} \,  e^{- t^2 \s^2/(2 \t^2)} \sum_{q=0}^{\min(k,l)} \frac{\sqrt{k! \, l!}}{q! (k-q)! (l-q)!}   (- i t)^{k+l-2q} \, ,
    \label{eq:chi0}
\end{align}
where $\j = \arg(-A_{01})$. In order to obtain the probability distribution of $X^{[N]}$ we take the Fourier transform $\mathcal{F}[\chi(t)](x)$ of the previous expression. Using that and $\mathcal{F} \big[ (it)^n f(t) \big] (x) = \frac{d^n}{dx^n} \mathcal{F} \big[  f(t) \big] (x)$, we get
\begin{align}
    P ( x ) & =   \sum_{k,l} e^{-i k \j} c_k^* c_l e^{i l \j}  \,  \sum_{q=0}^{\min(k,l)}   \frac{\sqrt{k! \, l! \,}}{q! (k-q)! (l-q)!}   \, \frac{(-1)^{k+l-2q}}{ \sqrt{2 \pi} \s/\t}  \, \frac{d^{k+l-2q}}{dx^{k+l-2q}} \, e^{ - \t^2 x^2 /(2 \s^2) } \nonumber \\
    & = \sum_{k,l} e^{-i k \j} c_k^* c_l e^{i l \j}  \,    \sum_{q=0}^{\min(k,l)}  \frac{\sqrt{k! \, l! \,}}{q! (k-q)! (l-q)!}   \, \frac{1}{ \sqrt{2 \pi} \s/\t} \, \left(   \frac{\t}{\s} \right)^{k+l-2q} \,  e^{-\left( \frac{\t x}{\sqrt{2} \s} \right)^2} \, \he_{k+l-2q} \left( \frac{\t x}{\s} \right) \nonumber \\
    & = \sum_{k,l}e^{-i k \j} c_k^* c_l e^{i l \j}  \, \frac{e^{-\left( \frac{\t x}{\sqrt{2} \s} \right)^2}}{ \sqrt{2 \pi}  \s/\t}   \sum_{q=0}^{\min(k,l)} \frac{\sqrt{k! \, l! \,}}{q!} {k+l-2q \choose l-q} \left( \frac{\t}{\s} \right)^{k+l-2q} \, \frac{  \he_{k+l-2q} \left( \frac{\t x}{\s} \right)}{ (k+l-2q)!} \, ,
    \label{eq:distributionbeforelemma}
\end{align}
where in the second equality we have used Rodrigues' formula for Hermite polynomials \cite{abramowitz} and in the last equality we have introduced the binomial coefficient for convenience. Now, the above sum over Hermite polynomials can be written as a product of two Hermite polynomials of order $k$ and $l$ by virtue of the following lemma:

\begin{lemma*}
For any constants $\a$, $\b$ and $\c$ satisfying $\a^2=\b^2+\c^2$, and for any non-negative integers $m$ and $n$ with $n \leq m$, the following identity of Hermite polynomials holds:
\begin{align}
    \frac{e^{-\left( \frac{x}{\sqrt{2} \a} \right)^2}}{\sqrt{2 \pi} \a} \sum_{s=0}^n &  \frac{1}{s!} {m+n-2s \choose n-s} \left( \frac{\c}{\a} \right)^{m+n-2s} \, \frac{\he_{m+n-2s} \left( \frac{x}{\a}\right)}{(m+n-2s)!}  = \int dx'  \frac{e^{-\left( \frac{x-x'}{\sqrt{2} \b} \right)^2}}{\sqrt{2 \pi} \b}  \frac{e^{-\left( \frac{x'}{\sqrt{2} \c} \right)^2}}{\sqrt{2 \pi} \c} \, \frac{\he_m \left( \frac{x'}{ \c} \right)}{ m!}  \frac{\he_n \left(\frac{x'}{\c} \right)}{n!} \, .
\end{align}
\end{lemma*}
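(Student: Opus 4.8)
The plan is to linearize the identity by passing to the generating functions of the Hermite polynomials. Writing $G_w(\xi) = e^{-\xi^2/(2w^2)}/(\sqrt{2\pi}\,w)$ for a centred Gaussian of width $w$, the right-hand side is the convolution $\int d\xi\, G_\b(x-\xi)\,G_\c(\xi)\,\he_m(\xi/\c)\he_n(\xi/\c)/(m!\,n!)$. The essential tool is the generating function of the (probabilist's) Hermite polynomials used here, $\sum_{k\ge 0}\frac{w^k}{k!}\he_k(y)=e^{wy-w^2/2}$, which follows directly from the Rodrigues formula quoted in the main text. I would form the bivariate generating function $\sum_{m,n\ge 0}u^m v^n(\cdot)$ of both sides, show that the two resulting functions of $(u,v)$ coincide for fixed $x$, and then recover the lemma by equating coefficients of $u^m v^n$. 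The hypothesis $n\le m$ is merely a labelling convention: both sides are symmetric under $m\leftrightarrow n$ (the $s$-sum on the left is capped at $\min(m,n)$), so it costs nothing to treat all $m,n\ge 0$ simultaneously.

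\noindent\textbf{Right-hand side.} Here the two Hermite factors sum to $e^{(\xi/\c)(u+v)-(u^2+v^2)/2}$. Multiplying by $G_\c(\xi)$ and completing the square in $\xi$ produces $e^{uv}\,G_\c\big(\xi-\c(u+v)\big)$, where the cross term $uv$ arises exactly from $\tfrac12(u+v)^2-\tfrac12(u^2+v^2)$. The remaining $\xi$-integral is then $\int d\xi\,G_\b(x-\xi)\,G_\c\big(\xi-\c(u+v)\big)$, the density of a sum of two independent Gaussians of variances $\b^2$ and $\c^2$; this is where the assumption $\a^2=\b^2+\c^2$ enters, since variance additivity makes the convolution equal $G_\a\big(x-\c(u+v)\big)$. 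Thus the generating function of the right-hand side is $e^{uv}\,G_\a\big(x-\c(u+v)\big)$.

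\noindent\textbf{Left-hand side.} I would first simplify the combinatorial weight via $\frac{1}{s!}\binom{m+n-2s}{n-s}\frac{1}{(m+n-2s)!}=\frac{1}{s!\,(m-s)!\,(n-s)!}$, and then reindex with $a=m-s$, $b=n-s$. This decouples the triple sum: the $s$-sum gives $\sum_s (uv)^s/s!=e^{uv}$, while the $(a,b)$-sum collapses through $\sum_{a+b=r}\frac{u^a v^b}{a!\,b!}=\frac{(u+v)^r}{r!}$ and a second application of the Hermite generating function with $w=\frac{\c}{\a}(u+v)$ and $y=x/\a$, yielding $e^{\frac{\c x}{\a^2}(u+v)-\frac{\c^2}{2\a^2}(u+v)^2}$. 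Multiplying by the overall factor $G_\a(x)$ and noting that this last exponential is precisely $G_\a\big(x-\c(u+v)\big)/G_\a(x)$, the left-hand generating function also equals $e^{uv}\,G_\a\big(x-\c(u+v)\big)$, matching the right-hand side. Reading off the coefficient of $u^m v^n$ then gives the claimed identity.

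\noindent\textbf{Main obstacle.} All the manipulations are routine once the generating-function viewpoint is adopted; the only points demanding care are the reindexing $a=m-s,\ b=n-s$ (which is exactly what factors out the $e^{uv}$ contribution) and the justification that equality of the two generating functions, both entire in $(u,v)$ for fixed $x$, licenses term-by-term identification. Conceptually there is no deep difficulty: the hypothesis $\a^2=\b^2+\c^2$ is nothing but the Gaussian convolution condition $G_\b*G_\c=G_\a$, and the lemma is simply its expression in the Hermite basis.
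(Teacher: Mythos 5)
Your proof is correct, but it takes a genuinely different route from the paper's. The paper starts from Nielsen's linearization formula for a \emph{product} of two Hermite polynomials --- which is exactly the $\b=0$ case of the lemma --- and then applies a Weierstrass transform (Gaussian convolution of width $\b$) to both sides, using Nielsen's generalized polynomials $\ghe_n(x,a)$ and their behaviour under Gaussian integrals to track how each term transforms; the hypothesis $\a^2=\b^2+\c^2$ enters through that integral identity. You instead linearize by passing to the bivariate exponential generating function $\sum_{m,n}u^m v^n(\cdot)$ and show that both sides equal $e^{uv}\,G_\a\big(x-\c(u+v)\big)$, where $G_w$ denotes the centred Gaussian density of width $w$; your computations are all sound: the completion of the square producing the cross term $e^{uv}$, the convolution step where variance additivity uses $\a^2=\b^2+\c^2$, the simplification $\frac{1}{s!}\binom{m+n-2s}{n-s}\frac{1}{(m+n-2s)!}=\frac{1}{s!\,(m-s)!\,(n-s)!}$, and the reindexing $a=m-s$, $b=n-s$ that factors out $e^{uv}$. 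What your approach buys is self-containedness: you never need to cite Nielsen's product identity (indeed you reprove it, since setting $\b=0$, $\c=\a$ in your result recovers it), and the meaning of the hypothesis as Gaussian variance additivity under convolution is laid bare, whereas in the paper it is somewhat hidden inside the integral identity for $\ghe_n$. What it costs is the analytic bookkeeping you flag at the end: extracting coefficients on the right-hand side requires interchanging the $x'$-integral with the double sum, which needs a domination argument --- e.g.\ Cram\'er's bound $|\he_k(y)|\le C\sqrt{k!}\,e^{y^2/4}$, under which the summed integrand is bounded by a constant times $G_\b(x-x')$ alone, hence integrable. You correctly identify this as the only point demanding care and it is indeed routine, so the omission is one of polish rather than substance.
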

\begin{proof}
Let us first define the generalized Hermite polynomials as done by Nielsen \cite{nielsen}:
\begin{align}
    \ghe_n(x,a) = \frac{(-1)^n (2a)^n}{n!} \, e^{\frac{x^2}{4a}} \, \frac{d^n}{dx^n} e^{- \frac{x^2}{4a}} \, .
\end{align}
These are related to the ones used in the main text by
\begin{align}
    \ghe_n \left( x, 1/2 \right) = \frac{1}{ n!} \, \he_n (x) \, ,
\end{align}
and they satisfy the following properties
\begin{align}
    \ghe_n ( b \, x, b^2 \, a) & = b^n \, \ghe_n(x,a) \, , \quad  \forall b \in \mathbb{R} \, , \\
    \int_{-\infty}^{+\infty} dx \, \ghe_n(x,a) e^{- \xi x^2 + \eta x} & = \sqrt{\frac{\pi}{\xi} \, } \,  e^{\frac{\eta^2}{4 \xi}} \, \ghe_n \left( \frac{\eta}{2 \xi} , a - \frac{1}{4 \xi} \right)  \, , \quad  \forall \eta, \xi \in \mathbb{R} \, \text{ with } \,  \xi > 0 \, .
\end{align}
As proved by Nielsen, the generalized Hermite polynomials satisfy the identity
\begin{align}
  \ghe_m \left( \frac{x}{\c}, a \right) \ghe_n \left(\frac{x}{\c}, a \right) = \sum_{s=0}^n \frac{(2a)^s}{s!} {m+n-2s \choose n-s}  \, \ghe_{m+n-2s} \left( \frac{x}{\c}, a \right)   \, , 
\end{align}
which, setting $a=\frac{1}{2}$, corresponds to the limit $\b =0$ of the wanted identity. Taking the Weierstrass transform and using the above properties yields
\begin{align}
    \int_{-\infty}^{+\infty} dx'  \frac{e^{-\left( \frac{x-x'}{\sqrt{2} \b} \right)^2}}{\sqrt{2 \pi} \b}  \frac{e^{-\left( \frac{x'}{\sqrt{2} \c} \right)^2}}{\sqrt{2 \pi} \c} & \, \ghe_m \left( \frac{x'}{\c}, a \right) \ghe_n \left(\frac{x'}{\c}, a \right) = \nonumber \\
    & = \int_{-\infty}^{+\infty} dx'  \frac{e^{-\left( \frac{x-x'}{\sqrt{2} \b} \right)^2}}{\sqrt{2 \pi} \b}  \frac{e^{-\left( \frac{x'}{\sqrt{2} \c} \right)^2}}{\sqrt{2 \pi} \c} \,  \sum_{s=0}^n \frac{(2a)^s}{s!} {m+n-2s \choose n-s}  \, \ghe_{m+n-2s} \left( \frac{x'}{\c}, a \right) \nonumber \\
    & = \frac{e^{-\left( \frac{x}{\sqrt{2} \b} \right)^2}}{2 \pi \b} \sum_{s=0}^n \frac{(2a)^s}{s!} {m+n-2s \choose n-s} \int_{-\infty}^{+\infty} du \, e^{- \frac{\a^2}{2 \b^2} u^2 + \frac{\c x}{\b^2} u} \, \ghe_{m+n-2s} (u,a) \nonumber \\
    & =  \frac{e^{-\left( \frac{x}{\sqrt{2} \b} \right)^2}}{2 \pi \b} \sum_{s=0}^n \frac{(2a)^s}{s!} {m+n-2s \choose n-s} \sqrt{\frac{2 \pi \b^2}{\a^2}} \, e^{\left( \frac{\c x}{ \sqrt{2} \a \b} \right)^2} \, \ghe_{m+n-2s} \left( \frac{\c x}{\a^2} ,  a - \frac{\b^2}{2 \a^2} \right) \nonumber \\
    & = \frac{e^{-\left( \frac{x}{\sqrt{2} \a} \right)^2}}{\sqrt{2 \pi} \a} \sum_{s=0}^n \frac{(2a)^s}{s!} {m+n-2s \choose n-s}   \, \left( \frac{\c}{\a} \right)^{m+n-2s} \,  \ghe_{m+n-2s} \left( \frac{x}{\a} ,  \frac{1}{2} + (2a-1) \frac{\a^2}{2 \c^2} \right) \, .
\end{align}
Setting $a = 1/2$ completes the proof.
\end{proof}
Then, the probability distribution \eqref{eq:distributionbeforelemma} may be written as
\begin{align}
    P(x) & = \sum_{k,l} e^{-i k \j} c_k^* c_l e^{i l \j}  \,  \sqrt{k! \, l! \,}   \int_{-\infty}^{+\infty} dx' \frac{e^{- \left( \frac{x-x'}{\sqrt{2} s} \right)^2}}{\sqrt{2 \pi} s} \frac{e^{- \left( \frac{x'}{\sqrt{2}} \right)^2}}{\sqrt{2 \pi}} \, \frac{\he_k \left( x' \right)}{  k!} \frac{\he_l \left( x' \right)}{ l!} \nonumber  \\
    & = \sum_{k,l} e^{-i k \j} c_k^* c_l e^{i l \j}  \,   \int_{-\infty}^{+\infty} dx' \frac{e^{- \left( \frac{x-x'}{\sqrt{2} s}  \right)^2}}{\sqrt{2 \pi} s}  \left[ \frac{ e^{-x^2/4} \, \he_k (x')}{ (2 \pi)^{1/4} \sqrt{k!}} \right]^* \,  \left[ \frac{ e^{-x^2/4} \, \he_l (x')}{ (2 \pi)^{1/4} \sqrt{k!}} \right] \nonumber \\
    & = \sum_{k,l} e^{-i k \j} c_k^* c_l e^{i l \j}  \,   \int_{-\infty}^{+\infty} dx' \frac{e^{- \left( \frac{x-x'}{\sqrt{2} s}  \right)^2}}{\sqrt{2 \pi} s} \braket{k}{x'} \braket{x'}{l} \, ,
\end{align}
where $s^2=\s^2/\t^2-1$ and we have introduced the wave-functions
\begin{align}
    \braket{x}{k} = \frac{1}{(2 \pi)^{1/4}} \frac{1}{\sqrt{k!}} \, e^{-x^2/4} \, \he_k (x) 
\end{align}
of a one dimensional harmonic oscillator with constant $m \omega/\hbar=1/2$. This proofs the formula \eqref{eq:distributionnk} from the main text.

\section{Violation of ML at $\a=1/2$}
\label{appb}
Here we show a simple Bell violation with a correlated state of the form
\begin{align}
    \ket \psi = \sum_k c_k \ket k _A \ox \ket k _B \, .
\end{align}
Consider the CHSH inequality
\begin{align}
-2 \leq \avg{\mathcal{B}} \leq 2 \, ,
\end{align}
in terms of the Bell-CHSH parameter $\mathcal{B} = A \ox (B+B')+A'\ox (B-B')$, where $A, A', B$ and $B'$ are $\pm 1$-valued observables. Here we set these to be the sign of phase-space observables, i.e.
\begin{align}
    A= U^\dagger_{\j_A} \sgn(x_A) U_{\j_A} \, , \quad A'= U^\dagger_{\j_A'} \sgn(x_A) U_{\j_A'} \, ,
\end{align}
in terms of the unitaries $U_{\j} = \sum_k e^{i k \j} \ket{k} \bra{k}$, and similarly for $B$ and $B'$. The quantum correlations are of the form
\begin{align}
    \avg{AB}_Q & = \sum_{k,l} e^{-ik(\j_A+\j_B)} c_k^* c_l e^{i l(\j_A + \j_B)} \, _A \mel{k}{\sgn(x_A)}{l}_A \, \, _B\mel{k}{\sgn(x_B)}{l}_B \nonumber \\
    & = \sum_{k,l} e^{-ik(\j_A+\j_B)} c_k^* c_l e^{i l(\j_A + \j_B)} \, (i_{kl})^2 \, ,
\end{align}
where 
\begin{align}
    i_{kl} = \int_{-\infty}^{+\infty} dx \, \sgn (x) \, \braket{k}{x} \, \braket{x}{l} \, .
\end{align}
Let
\begin{align}
\ket{\psi} =\frac{2}{\sqrt{10}} \,  \ket{0}_A  \ket{0}_B + \frac{1}{\sqrt{2}}\, \ket{1}_A  \ket{1}_B + \frac{1}{\sqrt{10}}  \ket{2}_A \ket{2}_B  \, ,
\end{align}
Then the only non-zero integrals are $i_{01} = i_{10} = \sqrt{2/\pi}$ and $i_{12} = i_{21} = 1/\sqrt{\pi}$, and the expectation value of the Bell-CHSH parameter can be written as
\begin{align}
    \avg{\mathcal{B}}_Q = \frac{\sqrt{5}}{\pi} \big(  c_{11} + c_{12} + c_{21} - c_{22} \big)
\end{align}
where $c_{ij} = \cos \big( \j_A^{(i)} + \j_B^{(j)} \big)$. Now, the standard arrangement for the maximal CHSH violation (with the angles separated by $45^{\circ}$), the combinations of cosines attains its maximum value $2 \sqrt{2}$, and so the Bell inequality is violated:
\begin{align}
\avg{\mathcal{B}}_Q = 2 \frac{\sqrt{10}}{\pi} >2 \, .
\end{align}

\section{Robustness of the MQB at $\a=1/2$}
\label{appc}
\subsubsection*{Robustness against particle loss}
Here we show robustness of the previous result against particle loss. Let us assume a simple model in which we associate to every particle a random variable $o_i\in \{0,1\}$, where $o_i=0$ represents the event when the $i$-th particle is lost and $o_i=1$ when it is not lost, thus reaching the detectors. Let us assume that these random variables are independent, with probability of reaching the detectors $P(o_i=1)=p \in [0,1]$ and probability of being lost $P(o_i=0) =1-p$. Then the intensity measured is $\sum_{i=1}^N o_i a_i$. Let us further assume that one can count the number of particles received, so that the resulting macroscopic variable (rescaled by a factor of $p$ as discussed in the main text)  is
\begin{align}
    X_p^{[N]} = \frac{\sum_{i=1}^N o_i (a_i - \m)}{p \, \t \sqrt{N}} \, .
\end{align}
The corresponding characteristic function is
\begin{align}
    \chi_p(t) & = \sum_{o_1, ..., o_N} \sum_{a_1, ..., a_N} e^{i t X^{[N]}} P(o_1, ..., o_N) \, \tr \bigg\{ \rho \Big( E_{a_1} \ox ... \ox E_{a_N} \Big)\bigg\} \nonumber \\ 
    & = \tr \bigg\{ \rho \bigg[ \sum_{a_1} E_{a_1} \Big( 1 - p +p e^{it(a_1-\m)/(p \t \sqrt{N})} \Big) \bigg] \ox ... \ox \bigg[\sum_{a_N} E_{a_n} \Big( 1 - p +p e^{it(a_N-\m)/(p \t \sqrt{N})} \Big) \bigg] \bigg\} \nonumber \\ 
    & = \tr \Big\{ \rho \, \mathcal{A}_p^{\ox N} \Big\}
\end{align}
where $\mathcal{A}_p = \sum_a E_a \Big( 1 - p +p e^{it(a-\m)/(p \t \sqrt{N})} \Big)$. The calculation is thus very similar to the one in the previous section. Choosing as before $\m=A_{00}$ and $\t = |A_{01}|$, the characteristic function in the limit reads
\begin{align}
    \chi_p(t) = \sum_{k,l} e^{-ik\j} c_k^* c_l e^{il\j} \, \exp \Big\{ - \frac{t^2 \s^2}{2 p^3 \, \t^2} \Big\} \, \sum_{q=0}^{\min(k,l)} \frac{\sqrt{k! \, l!}}{q! (k-q)! (l-q)!} (-it)^{k+l-2q}  \, ,
\end{align}
where again $\j = \arg(-A_{01})$. We see that all that has changed with respect to \eqref{eq:chi0} is that we have $\s^2/(p^3 \t^2)$ instead of $\s^2/\t^2$. In the end, this simply translates into an enlarged width of the Gaussian POVM \eqref{eq:gaussianpovm}
\begin{align}
    s_p^2 = \frac{\s^2}{p^3 \, \t^2} - 1 \, .
\end{align}

\subsubsection*{Other noise channels}
Let us further consider some independent noise channels acting on individual particles. For instance, consider the depolarizing and dephasing channels
\begin{align}
    \G_\text{pol}(\rho) & = (1-\l) \rho + \l \frac{\mathds{1}}{2} \, , \\
    \G_\text{pha}(\rho) & = (1-\l) \rho + \l Z \rho Z \, ,
\end{align}
where $\l \in [0,1]$ and $Z$ is the phase flip. Applying such channels to the density matrix is equivalent to applying the adjoint channel to the one-particle POVM
\begin{align}
    \G_\text{pol}^\dagger(E_a) & = (1-\l) E_a + \l \, \frac{\mathds{1}}{2} \,  \tr E_a \, , \\
    \G_\text{pha}^\dagger(E_a) & = (1-\l) E_a + \l Z E_a Z \, .
\end{align}
These are still POVMs. Then, all that needs to be modified in the macroscopic limit is the parameters $s$ and $\j$ that specify the macroscopic POVM in \eqref{eq:linearmap}. In particular we find
\begin{align}
    s_\text{pol}^2 & = \frac{(1-\l) A_{00}^{(2)} + \frac{\l}{2} \tr A^{(2)} - \Big( (1-\l) A_{00} + \frac{\l}{2} \tr A \Big)^2}{(1-\l)^2 \t^2 }- 1 \, , \quad & \j_\text{pol} = \j + \frac{1- \sgn (1- \l)}{2} \pi \, , \, \, ~ \nonumber \\
    s_\text{pha}^2 & = \frac{\s^2}{ (1-2 \l)^2 \t^2} -1 \, , \quad & \j_\text{pha}  = \j + \frac{1 - \sgn (1- 2 \l)}{2} \pi \, ,
\end{align}
where $A= \sum_a a E_a$, $A^{(2)} = \sum_a a^2 E_a$, $\s^2 = A_{00}^{(2)} - (A_{00})^2$, $\t = |A_{01}|$ and $\j = \arg(-A_{01})$.

\section{MQB at $\a=1$}
\label{appd}
Here we derive the distribution for the macroscopic variable
\begin{align}
    X_{\a=1}^{[2N]} = \frac{1}{\t \, 2 N} \sum_{i=1}^{2N} \big(a_i - \m \big)
\end{align}
on a general superposition state $\ket{\Psi_{2N}} = \sum_k c_k \ket{2N, N+k}$, and we also show that the corresponding bipartite distributions can be explained by a local model. The characteristic function of $X^{[2N]}$ is
\begin{align}
    \chi(t) &= \sum_{a_1, ..., a_{2N}} e^{i t X^{[2N]}} \tr \Big[ \rho \big( E_{a_1} \ox ... \ox E_{a_{2N}} \big)  \Big] \nonumber \\ 
    & = e^{-it \m / \t} \tr \bigg[ \rho \Big( \sum_{a_1} E_{a_1} e^{i t a_1/(2 N \t)} \Big) \ox ... \ox \Big( \sum_{a_{2N}} E_{a_{2N}} e^{i t a1/(2 N \t)} \Big) \bigg] \nonumber \\
    & = e^{-i t \m / \t} \sum_{k,l} c_k^* c_l \, \mel{2N,N+k}{\mathcal{A}^{\ox 2N} }{2N, N+l} \, ,
\end{align}
where $\mathcal{A} = \sum_a E_a e^{ita/(2N \t)}$. Now, using the following expression for Dicke states
\begin{align}
    \ket{2N, N+k} = \frac{1}{\sqrt{{2N \choose N+k}}} \frac{1}{2 \pi} \int_{-\pi}^\pi d\f \big( \ket 0 + e^{i\f} \ket 1 \big)^{\ox 2N} \, e^{-i(N+k)\f} \, ,
\end{align}
the matrix element in the sum above is
\begin{align}
    & \mel{2N,N+k}{ \mathcal{A}^{\ox 2N} }{2N, N+l} = \nonumber \\
    & = \frac{1}{(2 \pi)^2} \frac{1}{\sqrt{{2N \choose N+k} {2N \choose N+l}}} \int_{-\pi}^\pi d\f_1 d\f_2 \,  e^{i (N+k)\f_1} \big( \bra 0 + e^{-i\f_1} \bra 1  \big)^{\ox 2N} \mathcal{A}^{\ox 2N} \big( \ket 0 + e^{i\f_2} \ket 1  \big)^{\ox 2N}   e^{-i(N+l)\f_2} \nonumber \\
    & = \frac{1}{(2 \pi)^2} \frac{1}{\sqrt{{2N \choose N+k} {2N \choose N+l}}}\int_{-\pi}^\pi d\f_1 d\f_2 \,  e^{i (N+k)\f_1 - i (N+l)\f_2} \Big[  \big( \bra 0 + e^{-i\f_1} \bra 1  \big) \mathcal{A} \big( \ket 0 + e^{i\f_2} \ket 1  \big)  \Big]^{2N}\nonumber \\ 
    & = \frac{1}{(2 \pi)^2} \int_{-\pi}^\pi d\f_1 d\f_2 \,  e^{i (k\f_1 - l\f_2)} \frac{e^{i N (\f_1 -\f_2) }}{\sqrt{{2N \choose N+k} {2N \choose N+l}}} \big[ \mathcal{A}_{00} + e^{i\f_2} \mathcal{A}_{01} + e^{-i\f_1} \mathcal{A}_{10} + e^{i(\f_2-\f_1)} \mathcal{A}_{11} \big]^{2N} \nonumber \\
    & = \frac{1}{(2 \pi)^2} \int_{-\pi}^\pi d\f_1 d\f_2 \, e^{i (k\f_1 - l\f_2)} \frac{e^{i N (\f_1 -\f_2) } \big[ 1 + e^{i(\f_2-\f_1)} \big]^{2N}}{\sqrt{{2N \choose N+k} {2N \choose N+l}}} \bigg[ \frac{\mathcal{A}_{00} + e^{i\f_2} \mathcal{A}_{01} + e^{-i\f_1} \mathcal{A}_{10} + e^{i(\f_2-\f_1)} \mathcal{A}_{11}}{1 + e^{i(\f_2-\f_1)}} \bigg]^{2N}
\end{align}
Now, introducing the expansion $\mathcal{A}_{ij} = \d_{ij} + \frac{i t}{2N \tau} A_{ij} + O(N^{-2})$ in terms of the matrix elements of $A= \sum_a a E_a$, we have
\begin{align}
    & \mel{2N,N+k}{ \mathcal{A}^{\ox 2N} }{2N, N+l} = \nonumber \\
    & = \frac{1}{(2 \pi)^2} \int_{-\pi}^\pi d \f_1 d\f_2 \, e^{i (k \f_1 - l \f_2)} \frac{ \Big( e^{i(\f_1-\f_2)} \big[ 1 + e^{i (\f_2-\f_1)} \big]^2 \Big)^N}{\sqrt{{2N \choose N+k} {2N \choose N+l}}} \times \nonumber \\ 
    & \qquad \qquad \qquad  \qquad \qquad  \qquad \qquad  \qquad \qquad  \times \bigg[ \frac{1 + \frac{it}{2N\t} A_{00}+ e^{i \f_2} \frac{it}{2N\t} A_{01} + e^{-i \f_1} \frac{it}{2N\t} A_{10} + e^{i(\f_2-\f_1)} (1 + \frac{it}{2N\t} A_{11} )}{1 + e^{i(\f_2-\f_1)}} \bigg]^{2N} \nonumber \\
    & = \frac{1}{(2 \pi)^2} \int_{-\pi}^\pi d \f_1 d\f_2 \, e^{i (k \f_1 - l \f_2)} \frac{ \Big( 2 \big[ 1 + \cos(\f_1-\f_2) \big] \Big)^N}{\sqrt{{2N \choose N+k} {2N \choose N+l}}} \bigg[ 1 + \frac{i t}{2N\t} \frac{ A_{00} + e^{i \f_2}  A_{01}+ e^{-i \f_1} A_{10} + e^{i(\f_2-\f_1)} A_{11}}{1 + e^{i(\f_2-\f_1)}} \bigg]^{2N} \nonumber \\
    & = \frac{1}{(2 \pi)^2} \int_{-\pi}^\pi d \f_1 d\f_2 \, e^{i (k \f_1 - l \f_2)} \frac{4^N \cos^{2N} \left( \frac{\f_1-\f_2}{2} \right)}{\sqrt{{2N \choose N+k} {2N \choose N+l}}} \bigg[ 1 + \frac{i t}{2N\t} \frac{ A_{00} + e^{i \f_2}  A_{01}+ e^{-i \f_1} A_{10} + e^{i(\f_2-\f_1)} A_{11}}{1 + e^{i(\f_2-\f_1)}} \bigg]^{2N}
\end{align}
Using the limit representation of Dirac delta function
\begin{align}
    \d(x) =\frac{1}{ 2 \pi} \lim_{N \to \infty} \frac{ 4^N \, \cos^{2N} (x/2)}{{2N \choose N}} \, , \quad x \in [-\pi, \pi]\, ,
\end{align}
the limit yields
\begin{align}
    \lim_{N \to \infty} & \mel{2N,N+k}{ \mathcal{A}^{\ox 2N} }{2N, N+l} =\nonumber \\
    & = \frac{1}{(2 \pi)^2} \int_{-\pi}^\pi d \f_1 d\f_2 \, e^{i (k \f_1 - l \f_2)} 2 \pi \d ( \f_1 - \f_2) \exp \Big\{ \frac{i t}{\t} \frac{ A_{00} + e^{i \f_2}  A_{01}+ e^{-i \f_1} A_{10} + e^{i(\f_2-\f_1)} A_{11}}{1 + e^{i(\f_2-\f_1)}} \Big\} \nonumber \\
    & = \frac{1}{2 \pi} \int_{-\pi}^\pi d \f \, e^{i (k-l) \f}  \exp \Big\{ \frac{i t}{\t} \frac{ A_{00} + e^{i \f}  A_{01}+ e^{-i \f} A_{10} +  A_{11}}{1 + 1} \Big\} \nonumber \\
    & = \exp \Big\{ \frac{it}{\t}  \frac{\tr A}{2} \Big\} \frac{1}{2 \pi} \int_{-\pi}^\pi d \f \, e^{i (k-l) \f}  \exp \Big\{ \frac{i t}{2\t} \big( e^{i \f}  A_{01}+ e^{-i \f} A_{10} \big) \Big\} \, .
\end{align}
Then, going back to the characteristic function and choosing $\mu = \frac{1}{2} \tr A$ and $A_{01} = \t e^{i \varphi}$, we have
\begin{align}
    \chi(t) &= \sum_{k,l} c_k^* c_l \frac{1}{2 \pi} \int_{-\pi}^\pi d\f \, e^{i(k-l) \f} \exp \Big\{ \frac{i t }{2} \big(e^{i (\f + \j)} + e^{-i(\f + \j)} \big) \Big\} \nonumber \\ 
    & =\sum_{k,l} c_k^* c_l  \frac{1}{2 \pi} \int_{-\pi}^\pi d\f \, e^{ i (k-l) \f +  i t   \cos (\f + \j) } \nonumber \\ 
    & =\sum_{k,l} c_k^* c_l  \frac{1}{2 \pi} \int_{-\pi}^\pi d\f \, e^{ i (k-l) (\f-\j) +  it  \cos \f } \, .
\end{align}
And the probability distribution is
\begin{align}
    P(x) & = \sum_{k,l} c_k^* c_l \frac{1}{(2 \pi)^2} \int_{-\pi}^\pi d\f \, e^{i(k-l)(\f-\j)} \int dt \, e^{-ixt} \, e^{ it \cos \f } \nonumber \\
    & =\sum_{k,l}  e^{-i k \j} \, c_k^* c_l \,  e^{i l \j}  \,  \frac{1}{2 \pi } \int_{-\pi}^\pi d\f \, e^{i(k-l) \f} ~ \d \left( \cos \f - x \right) \, . \label{eq:probintphi}
\end{align}
Now, if we introduce $x= \cos \q$, we have
\begin{align}
    P(\q) & = \frac{dx}{d\q}  \, \sum_{k,l} e^{-i k \j} \, c_k^* c_l \,  e^{i l \j}  \,   \frac{1}{2 \pi } \int_{-\pi}^\pi d\f \, e^{i(k-l) \f} ~ \d \left( \cos \f - \cos \q \right) \nonumber \\ 
    & = \sum_{k,l}  e^{-i k \j} \, c_k^* c_l \,  e^{i l \j}  \,   \, \frac{e^{i(k-l) \q} + e^{-i(k-l) \q}}{2 \pi } \, .
\end{align}
This proves equation \eqref{eq:distribution2nn} from the main text. Now consider a bipartite scenario, where Alice and Bob measure the macroscopic variables $X_{\a=1}^{[2N]}$ and $Y_{\a=1}^{[2N]}$ respectively on a state of the form
\begin{align}
    \ket{\Psi_{4N}} = \sum_{k,l} c_{kl} \ket{2N, N+k}_A \ox \ket{2N, N+l}_B \, .
\end{align}
Then, using \eqref{eq:probintphi}, it is easy to see that the bipartite distribution can be written straightforwardly as
\begin{align}
    P(x, y) = \sum_{klmn} e^{-i k \j_A - i l \j_B} c_{kl}^* c_{mn} e^{im\j_A + in\j_B} \frac{1}{(2 \pi)^2} \int_{-\pi}^\pi d\f_1 d\f_2 \, e^{i(k-m) \f_1 + i (l-n)\f_2} \, \d(\cos \f_1 - x) \, \d(\cos \f_2 - y) \, .
\end{align}
This is a local model:
\begin{align}
    P(x, y) =  \int_{-\pi}^\pi d\f_1 d\f_2 \, \bigg| \frac{1}{2 \pi}  \sum_{k,l} e^{i k (\j_A - \f_1)  + i l (\j_B - \f_2)} \, c_{kl} \bigg|^2  \, \d(\cos \f_1 - x) \, \d(\cos \f_2 - y) \, ,
\end{align}
where the term squared is the distribution $\m(\l)$ of the local hidden variable $\l = (\f_1, \f_2)$.

\end{document}